\newcommand{\initial}[1]{ % Defines the command and style for the first letter
\lettrine[lines=3,lhang=0.3,nindent=0em]{
\color{DarkGoldenrod}
{\textsf{#1}}}{}}
\newcommand{\kmer}{$k$-mer\xspace}
\newcommand{\kmers}{$k$-mers\xspace}
\newcommand{\setof}[2]{\{#1\;:\;#2\}}
\newcommand{\kmomer}{$(k-1)$-mer\xspace}
\newcommand{\kmomers}{$(k-1)$-mers\xspace}
\newcommand{\kspec}{\text{$\mathrm{sp}^k$\xspace}\xspace}
\newcommand{\twospec}{\text{$\mathrm{sp}^2$\xspace}\xspace}
\newcommand{\kmospec}{\text{$\mathrm{sp}^{k-1}$\xspace}}
\newcommand{\dna}[1]{\text{\normalfont\duttfamily #1}}
\theoremstyle{plain}
\newtheorem{theorem}{Theorem}
\newtheorem{problem}{Problem}
\newtheoremstyle{westside}% name of the style to be used
  {1pt}% measure of space to leave above the theorem. E.g.: 3pt
  {10pt}% measure of space to leave below the theorem. E.g.: 3pt
  {\itshape}% name of font to use in the body of the theorem
  {0pt}% measure of space to indent
  {\bfseries}% name of head font
  {}% punctuation between head and body
  { }% space after theorem head; " " = normal interword space
  {\thmname{#1}\thmnumber{ #2}\thmnote{ (#3)}}% Manually specify head
\theoremstyle{westside}
\newtheorem*{intproblem*}{}
\def\EnableMNotes{0}
\newcommand{\mynote}[1]{\footnote{\textsf #1}}
\newcommand{\mynote}[1]{}
\newcommand{\HorRule}{\color{DarkGoldenrod} \rule{\linewidth}{1pt}} % Defines the gold horizontal rule around the title
\title{Modeling Biological Problems in Computer Science: A Case Study in Genome Assembly}
\author{Paul Medvedev } % Your name
\par\end{flushleft}\HorRule} % Horizontal rule after the title
\date{\today} % Add a date here if you would like one to appear underneath the title block
\begin{document}
\maketitle
\thispagestyle{fancy} % Enabling the custom headers/footers for the first page 

\initial{A}\textbf{s computer scientists working in bioinformatics/computational biology, 
	we often face the challenge of coming up with an algorithm to answer a biological question.
	This occurs in many areas, such as variant calling, alignment, and assembly.
	In this tutorial, we use the example of the genome assembly problem to demonstrate how to go from a question in 
	the biological realm to a solution in the computer science realm. 
	We show the modeling process step-by-step, including all the intermediate failed attempts.\footnote{Please 
	note this is not an introduction to how genome assembly algorithms work and, if treated as such, would be incomplete and unnecessarily long-winded.}
}

\section{The biological question}
First, we must understand the biological question. 
We must rely on a biologist to explain it to us, in a simplified way which we can understand.
This can happen directly, through communication with a biologist, or it can happen indirectly, 
by reading articles in biological journals.
Let us begin with a hypothetical description that might have been provided to us:

\begin{problem}[Biological problem]\label{pr:bio}
Inside the nucleus of every cell is one or more DNA molecules, altogether called the genome.
A DNA molecule is a chain composed of smaller elements called nucleotides (abbreviated as nt).
There are four nucleotides: cytosine (\dna{C}), guanine (\dna{G}), adenine (\dna{A}), and thymine (\dna{T}).
Biologists are interested in knowing the sequence of this chain, but, unfortunately, 
there is of yet no technology that could report this full sequence.
What we do have is called {\em sequencing technology}, and it works as follows.
The DNA is extracted from each cell and then amplified (i.e. many copies are made).
The DNA is then fragmented into smaller pieces, and the initial sequence of each smaller piece is determined (with occasional errors).
The machine outputs a collection of these short sequences, called {\em reads}.
The number of reads and their length varies by the specific technology and experiment.
One typical scenario is to have 1 million reads of length 200nt for a genome of length $5$ million nucleotides.
Given such a collection of reads, the biologist would like to know the full sequence of the genome.
This is loosely referred to as the problem of {\em genome assembly}.
\end{problem}

A biologist would immediately point out that calling the above description a ``simplification'' is a bit of a euphemism.
In fact, there are two types of simplifications in the above description.
The first are factual inaccuracies. 
For example, a DNA molecule is not a chain composed of nucleotides, 
it is in fact two chains of nucleotides, with each chain being the Watson-Crick complement of the other\footnote{Recall 
from high-school biology that DNA is double-stranded, with each nucleotide form one strand being paired with one on the other.}.
%Recall that the Watson-Crick base pair of \dna{A} is \dna{T} and of \dna{C} is \dna{G}.} }.
Another example is that the amplification process is not perfect and can occasionally create DNA molecules that do not originally occur in the cell.
The second type of simplification is the omission of important information.
For example, we have left out the fact that the DNA molecule actually has a 3D structure, and this structure can influence how the fragmentation happens. 
Another omission is that the machine can output quality values for each nucleotide, representing the probability of an error.
In addition to intentional simplifications, there are also unknown factors that cannot be captured in a problem description.
For example, the technician performing the experiment may have made a mistake in the experimental protocol, resulting in unpredictable effects.

Simplifications are unavoidable.
As computer scientists, it would be time-prohibitive to understand all the details of the sequencing process.
In describing a problem to a computer scientist, a biologist must implicitly decide what details are most relevant and omit the ones that she thinks are not.
Even if a biologist wants to describe every single detail, 
there are details at the chemical level that even she does not fully grasp.
And even if she is also an expert in chemistry, there would be details at the level of physics that she would not grasp. 

Fundamentally, 
when a biological data generation process is described in words (even in excruciating detail),
the words are, by definition, a simplification of reality.
Even if we propose a perfect solution to the biological problem, as we understood it, 
this could still be a failure when applied to real data.
Hence, we must never be too confident in the correctness of any solution we propose for a stated biological problem. 
Keeping this caveat in mind, let us try to find a solution.

\section{A well-formulated problem}\label{sec:well}
It might be tempting to immediately start designing an algorithm to solve~\Cref{pr:bio}.
But, computer science is best at providing solutions to problems that are stated with a certain mathematical rigor.
%But, computer science is, to a large extent, a mathematical science that can provide solutions to mathematical problems.
The description of~\Cref{pr:bio} is not at a mathematical level which would allow us to harness the power of computer science.
Therefore, we must first formulate a problem which is stated in the language of computer science --- we call such a problem {\em well-formulated}.
Box~\ref{box:well} lists some principles of a well-formulated problem. 
Let us make our first attempt at a formulation.
\begin{intproblem*}
	\hspace{1pt} \\
	\textbf{ Input:} A collection of strings $S=\{s_1, \ldots, s_m\}$ generated by a sequencing experiment from a genome.\\
		\textbf{ Output:} The genome that generated $S$. 
\end{intproblem*}
Here, we have made explicit the input and output requirements of our problem.
We have also replaced the imprecise notion of a ``short sequence,'' used to describe a read in~\Cref{pr:bio}, with the 
precise formality of a string.
Nevertheless, this problem falls short of being well-formulated for numerous reasons.
A useful sanity check to detect a bad formulation is to imagine giving it to another computer scientist, 
without giving any of the biological background. 
Would she be able to solve it?  
No --- she has no idea what the terms ``sequencing experiment,'' ``generated,'' or ``genome'' mean.
In other words, the problem is not self-contained.
Let us try again.
\begin{intproblem*}
	\hspace{1pt} \\
	\textbf{ Input:} A collection of strings $S=\{s_1, \ldots, s_m\}$, such that each string is a substring of an unknown string called the genome. \\
			\textbf{ Output:} The unknown genome.
\end{intproblem*}

We have made some improvements here --- we are no longer referring to those mathematically undefined terms. 
But we fail another sanity check.
A formulation must implicitly define a way to check if the output of a program that solves it is correct. 
For example, if $S=\{\dna{CGG}, \dna{AAC}\}$, is an output of \dna{CGGAAC} correct? That could have been the unknown string called the genome, but 
perhaps the genome was \dna{AACGG}. 
Our problem is ambiguous in this regard and does not give us a way to verify whether an algorithm is correct.
Undeterred, we can fix the formulation as follows: 

\begin{problem}[well-formulated problem]\label{pr:cs2}
	\hspace{1pt} \\
	\textbf{ Input:} A collection of strings $S=\{s_1, \ldots, s_m\}$. \\
			\textbf{ Output:} A string $g$ that is a {\em common superstring} of $S$, i.e. for all $i$, $s_i$ is a substring of $g$.
\end{problem}

The difference in this formulation is subtle but important. 
It makes no reference to an ``unknown genome.'' 
Instead, it states clearly that an output is correct if and only if it is a common superstring.
Thus, both \dna{AACGG} and \dna{CGGAAC} are correct solutions.
%Now, it is clear that any solution which is a common superstring is correct. 
In fact, we finally have a well-formulated problem!

\begin{Sbox}
\begin{minipage}{3.2in}
	\begin{enumerate}
		\item The input/output requirements are defined.
		\item All terms are mathematically defined.
		\item The description is self-contained.
		\item It is possible to verify if a solution is correct.
	\end{enumerate}
\end{minipage}
\end{Sbox}

\begin{boxing}[t]
\begin{center}
%\shadowbox{\TheSbox}
	\TheSbox
\end{center}
\caption{Some principles of a well-formulated problem.}
\label{box:well}
\end{boxing}

\section{Simplifying assumptions}\label{sec:simplify}

In order to arrive at~\Cref{pr:cs2}, 
we made some simplifying assumptions about the biological process described in~\Cref{pr:bio}.
For example, we ignored the fact that the reads may have errors. 
If a read has an error in it (e.g. one position is a \dna{C} instead of an \dna{A}), then 
it is no longer a substring of the genome.
We already argued how the simplifications in~\Cref{pr:bio} were inevitable.
But the simplification of~\Cref{pr:cs2} was by our choice,
and it is not clear that a solution to~\Cref{pr:cs2} will work on real data.
So why did we create this simplification then?

We hope that a simple formulation, even if it is inaccurate, will lead to an algorithm that is useful in practice.
Generally speaking, a formulation is useful if it leads to an algorithm that is efficient and accurate 
when applied to real data.
Making the model more accurate would also make it more complicated, representing  a trade-off. 
On one hand, a formulation that is too simple risks leading to an algorithm that is not useful in practice.
On the other hand, a formulation that is too realistic risks making it impossible to solve the problem.
How do we know where to draw the line then? There is no simple rule -- that is part of the research challenge.
When faced with situations like this, computer scientists have three choices:
\begin{enumerate}
	\item Use a simplified formulation and hope that its inaccuracy does not affect the accuracy of the algorithm on real data. 
		In other words, we hope that the solution to our problem formulation will work well 
		when applied to real data, regardless of the simplifications.
	
	\item  Make the formulation more complex by removing the unrealistic assumptions. 
		In our case, we would include the possibility of errors by 
		loosening the superstring requirement 
		in Problem~\ref{pr:cs2} to allow mismatches.
		%\footnote{{\color{red} This could be done, for example, by changing the output condition to ``a string $g$ such that for all $i$, $s_i$ is a substring of $g$ allowing at most one mismatch.  '' }}
		In going with this option, 
		we hope that we can still come up with an efficient algorithm, in spite of the model's complexity.
	\item 
		Break the problem into multiple sub-problems (i.e. modularize the problem formulation).
		In our case, we could formulate two problems instead of one.
		The first would be to take erroneous reads and correct them, and 
		the second would be to take corrected reads and find the common superstring (i.e.~\Cref{pr:cs2}).
		The second formulation makes the simplifying assumption,
		while the first is designed to ensure the assumption holds.
		Modularization has the potential to avoid complex formulations while avoiding simplifying assumptions.
		However, it is often not possible to transform the data so that the simplifying assumption holds.
\end{enumerate}
Knowing which option to choose is an art rather than a science.
The correct choice is often arrived at with trial-and-error and involves extensive experiments on both simulated and real data.
We will discuss this more thoroughly in~\Cref{sec:exp}.
Meanwhile, in~\Cref{pr:cs2}, we have chosen the first option, though we leave open the possibility to later switch to option three if necessary.

When solving our problem, we must keep in mind that our algorithm will likely live inside of a bioinformatics pipeline.
Both upstream and downstream algorithms will introduce errors into the data stream,
and all algorithms within the pipeline must be robust to the presence of such errors.
The benefits of an algorithm that perfectly solves a formulation are often nullified once it becomes part of such a pipeline,
because 1) a problem formulation can never capture all the sources of errors
and 2) errors will anyway be introduced by downstream algorithms.
The ultimate test of an algorithm is how it performs on real data, as opposed to its provable properties with respect 
to the problem formulation.

\section{Optimization criterion}
Problem~\ref{pr:cs2} is a well-formulated problem and so we can attempt a solution.
Can we solve it? Yes! It turns out to be very easy -- too easy in fact. 
One very efficient algorithm is to just output the concatenation of all the strings: $s_1s_2\cdots s_m$. 
Unfortunately, this solution is not very useful, because it is unlikely to have been the genome that generated the reads.
One way to see this is to observe that there are many other solutions. 
For example, we could concatenate the strings in any order we want.
Any such concatenation would be an equally valid solution to our formulation. 
In fact, there are infinitely many solutions, because if we concatenate any solution with itself, it is still a valid solution.
That makes it unlikely that if we just pick an arbitrary one, it will be the correct biological one.
So, while the problem is well-formulated, it is not very useful.

The issue with~\Cref{pr:cs2} is that while there are many solutions, 
there is no attempt to distinguish how one is better than another.
What we need instead is an {\em optimization problem}.
In an optimization problem, there is a set of {\em feasible solutions} and 
a function that gives a cost or a score to each feasible solution.
An {\em optimal solution} is a feasible solution with the lowest cost or highest score.

If we think of the common superstrings as the set of feasible solutions,
what kind of cost function could we use to differentiate between them?
Let us consider the example of $S=\{\dna{ACG}, \dna{CGT}\}$.
Both \dna{ACGT}, \dna{ACGCGT}, and \dna{ACGTTTTTCGT} are common superstrings.
Intuitively, though, one would probably pick \dna{ACGT} as the best solution because it is the shortest.
This intuition reflects the {\em parsimony principle}, which states that the simplest explanation is the correct one.
We can reformulate our problem as follows:
\begin{problem}[shortest common superstring]\label{pr:scs}
	\hspace{1pt} \\
	\textbf{ Input:} A collection of strings $S=\{s_1, \ldots, s_m\}$.\\
		\textbf{ Output:} A string $g$ that is a shortest common superstring of $S$.
\end{problem}
Notice that this is an optimization problem where the cost function is the length of the feasible solution.
Looking at our example, 
the optimal solution is in fact \dna{ACGT} and it is unique.
This is great, but, in the general case, 
do we have the desired property that there is a limited number of optimal solutions?
The answer is not immediately obvious, and we will revisit it in~\Cref{sec:contigassembly}.

\section{Computational complexity}\label{sec:complexity}
One issue we have not yet addressed is the computational complexity of solving the problem we formulate\footnote{For 
readers unfamiliar with computational complexity, it is considered useful to separate problems 
that have a polynomial-time algorithm from those for which such an algorithm is not known. The first class of problems are said to be {\em tractable}.
NP-hard problems are a type of problem that belong to the second class. 
There can be non-polynomial-time algorithms for NP-hard problems, but such algorithms typically do not scale well for large problems.}.
While coming up with a formulation, a good practitioner will always keep in mind whether 
the formulated problem can be efficiently solved using well-known techniques. 
What is its relationship to other well-studied theoretical problems?
In this case, \Cref{pr:scs} is a well-studied computer science problem called the shortest common superstring (SCS) problem. 
Its computational complexity turns out to be known; it is NP-hard, meaning it is unlikely that an exact polynomial-time algorithm exists.
This does not necessarily make a problem formulation bad. 
There are whole books describing available techniques to solve NP-hard problems. 
For example, the size of the input may be small enough to allow an asymptotically slow but exact algorithm.
Or, perhaps the worst-case inputs do not occur in practice or 
a heuristic algorithm\footnote{{\em Heuristics} are algorithms that do not always give the correct solution but are typically efficient.}
will work well on real instances.
%Other techniques, such as approximation algorithms, branch-and-bound, or integer linear programming solvers may be useful.

Of course, ideally we could have a formulation that is tractable (i.e. can be solved in polynomial time).
One way that this might still be achieved is to add or remove constraints from the definition.
As we will see in~\Cref{sec:toolbox}, simplifying our assumptions, and thereby removing constraints,
can make a problem tractable.
Adding new constraints can also have the same effect\footnote{The 
famous satisfiability problem (SAT) is a good example, for those readers who are familiar with it.
SAT is NP-hard, but, if each clause has at most two variables, it becomes solvable in polynomial time.}.
In our case, Problem~\ref{pr:scs} ignores the fact, 
which is clearly stated in Problem~\ref{pr:bio}, 
that the strings are over an alphabet of size 4.
Could Problem~\ref{pr:scs} be tractable if this restriction is added? 
Unfortunately for us, a quick internet search reveals that 
the SCS problem is NP-hard even when the alphabet is binary.
%Since one can reduce SCS over a binary alphabet to SCS over an alphabet of size 4, this means that SCS is NP-hard for an alphabet of size four.

\section{Simulations}
Putting aside the issue of computational complexity, let us go back again and ask ourselves:
if someone gave us a black-box algorithm for~\Cref{pr:scs}, 
would the output in fact reconstruct the genome (i.e. solve Problem~\ref{pr:bio})?
To find the answer, we need to test a solution to the SCS problem on a dataset where we know the answer.
We take a known genome sequence, e.g. that of the influenza virus, and do an idealized simulation of the sequencing process.
That is, we take substrings of length 200 from every position of the genome. 
Next, we need to implement an algorithm for SCS. 
First, we might try an exponential-time brute force algorithm, but we would find that it would not 
complete a run on our dataset.
Instead, a quick literature search reveals the existence of a greedy heuristic.
Initially, we start with a string made up from one arbitrary read.
To expand the string, we pick a read that overlaps our current string by the maximum possible among all reads.
We then extend our string using this read, and repeat until all reads are accounted for.
After implementing this heuristic, we run it on our dataset.

When comparing our solution to the influenza genome, we find that they are different.
Moreover, our solution is shorter than the influenza genome. 
After validating that the solution is indeed a common superstring of the reads,
we start to investigate the reason for the missing sequences. 
What we find is that some sequences which appear multiple times in the virus genome appear only once in our solution.
We follow up with a domain expert -- a biologist in this case -- to check if our solution makes sense.
After a careful analysis of our solution, she observes that there is something strange.
Genomes are typically full of repetitive elements -- DNA sequences that appear in multiple locations of the genome.
However, our solution has a very low repeat content.
Could it be that the problem is due to using a heuristic, instead of an exact algorithm?
After thinking more about the problem, we have a great insight that we capture in the following theorem:

\begin{theorem}\label{thm:scs}
	Let $S=\{s_1, \ldots, s_m\}$ be a set of strings of length $\ell$.
	Let $g$ be a shortest common superstring of $S$.
	Then, for any string $r$ with length greater than $2\ell - 2$, $r$ can be a substring of $g$ in at most one location.
	In other words, a SCS of $S$ does not contain any repeat of length greater than $2\ell - 2$.
\end{theorem}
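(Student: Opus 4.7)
My plan is a proof by contradiction. Assume some string $r$ with $|r| > 2\ell - 2$ occurs in $g$ at two distinct positions $p_1 < p_2$, and set $d = p_2 - p_1$. I would then exhibit a common superstring $g'$ with $|g'| < |g|$, contradicting the hypothesis that $g$ is a shortest common superstring.

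The construction I have in mind merges the two occurrences of $r$ into a single copy:
\[
  g' \;=\; g[0 \ldots p_1 + |r| - 1] \cdot g[p_2 + |r| \ldots],
\]
i.e., I excise the portion of $g$ from just after the first copy of $r$ through the end of the second copy of $r$. Since $d > 0$, we immediately get $|g'| = |g| - d < |g|$, so the length is good; everything else is about correctness.

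The bulk of the work is to check that every $s_i \in S$ remains a substring of $g'$. Writing $s_i = g[q \ldots q + \ell - 1]$, I would split into cases based on where $[q, q + \ell - 1]$ sits relative to the two copies of $r$. The routine cases are: $s_i$ lying entirely in the kept prefix (it stays at position $q$ in $g'$); $s_i$ lying entirely in the kept suffix (it appears at position $q - d$); and $s_i$ lying entirely inside the second copy of $r$ (it occurs at the corresponding position $p_1 + (q - p_2)$ inside the preserved first copy).

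The nontrivial case is when $s_i$ straddles a boundary of an occurrence of $r$. Here I would use the equality $g[p_1 + k] = g[p_2 + k]$ for all $k \in [0, |r| - 1]$, i.e., the periodicity $g[i] = g[i+d]$ on the interval $[p_1, p_1 + |r| - 1]$, to argue that $s_i$ also appears in $g'$ at an appropriately shifted position. This is exactly where the hypothesis $|r| > 2\ell - 2$ is used: for any read of length $\ell$ that straddles a boundary of $r$, this bound guarantees that a translation by $\pm d$ lands enough characters of $s_i$ inside the periodic region that the required character-by-character match goes through across the ``seam'' of $g'$. I expect the main obstacle to be the careful bookkeeping across all the straddling sub-cases (left/right ends of the first/second occurrence of $r$), and in particular separating the overlapping case $d \le |r|$ (where the middle region between the two $r$'s is empty and periodicity extends across $g'$'s seam) from the non-overlapping case $d > |r|$ (where one may additionally need to invoke maximality of $r$ to rule out troublesome reads in the middle region).
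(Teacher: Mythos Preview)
Your construction is more aggressive than the paper's, and in the non-overlapping case it deletes too much. When $d>|r|$ there is a genuine gap region $[p_1+|r|,\,p_2-1]$ of $g$ lying strictly between the two copies of $r$; any read $s_i$ contained entirely in that region is excised from $g'$ and need not appear anywhere else. Your suggested fix, ``invoke maximality of $r$,'' does not rescue this: even if $r$ is a maximal (non-extendable) repeat, the interior of the gap can contain reads that touch neither copy of $r$, and maximality says nothing about those. The same problem hits reads that straddle the right end of the first copy into the gap, or the left end of the second copy from the gap --- the periodicity relation $g[i]=g[i+d]$ only holds on $[p_1,\,p_1+|r|-1]$, so it gives you no control over characters in $[p_1+|r|,\,p_2-1]$, and these boundary reads have no obvious home in $g'$.

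The paper avoids all of this by being more surgical: write $r=xyz$ with $|x|=|z|=\ell-1$ and $|y|\ge 1$, and delete only $y$ from the \emph{second} occurrence of $r$, leaving everything else (including whatever lies between the two copies) intact. Then a read of length $\ell$ whose range meets the deleted $y$ is automatically flanked by $x$ on the left and $z$ on the right, hence lies entirely inside that occurrence of $r$; it therefore reappears inside the untouched first copy. This one observation handles every case at once, with no overlapping/non-overlapping split and no appeal to maximality. Your overlapping-case analysis is essentially fine, so the cleanest repair is to swap your merge for this local deletion.
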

\begin{proof}
	Assume for the sake of contradiction that $g$ contains a repeat $r$, with $|r| > 2\ell - 2$.
	Since the length of $r$ is greater than $2\ell - 2$, we can uniquely write it as a concatenation of three strings $r=xyz$, 
	where $|x| = |z| = \ell - 1$ and $|y| \geq 1$.
	Construct another string $g'$ that is exactly the same as $g$ except that the second occurrence of $r$ in $g$ is replaced by $xz$.
	In other words, we delete $y$ from the second occurrence of $r$.

	Now, we claim that $g'$ is a common superstring of $S$. 
	For any string $s \in S$, consider the range in $g$ where it is a substring.
	If that range does not overlap the location of the deleted $y$, 
	then $s$ will also be a substring of $g'$.
	If the range overlaps, then $s$ must be a substring of $xyz$ (and hence of $g'$).
	This is because the length of $s$ is only one character more than the length of $x$ or of $z$.

	%so if $s$ overlaps $y$ then it cannot extend to the left beyond $x$.

	\begin{comment}
	If the range overlaps, it can overlap in one of three ways, and 
	we can show for each of the three ways that there will be another location of $g'$ where $s$ will be a substring.
	If the range is a sub-range of the deleted $y$, then $s$ will be a substring of $g'$ at the location of the first occurrence of $y$.
	If the range extends to the left of the deleted $y$, then, 
	because the length of a read is at most $\ell$ and the length of $x$ is $\ell - 1$, 
	$s$ is a substring of $xy$. 
	In this case, it is a substring of $g'$, within the interval of the first occurrence of $xy$.
	By a similar argument, if the range extends to the right of the deleted $y$, the string $s$ is a substring of $yz$ and 
	is also a substring of $g'$.
	\end{comment}

	Thus, $g'$ is a common superstring of $S$.
	Also, $g'$ is strictly shorter than $g$ because we deleted a non-empty string ($y$) from $g$ to obtain $g'$.
	This is a contradiction of the fact that $g$ was the shortest common superstring.
\end{proof}

\Cref{thm:scs} indicates that the problem does not lie with our use of a heuristic.
Instead, it is a serious shortcoming of the SCS formulation itself. 
Specifically, it states that a SCS would not faithfully represent repeats, 
which is known as the problem of over-collapsing of repeats.
As a result of~\Cref{thm:scs}, a SCS will not be correct for most genomes. 
It turns out that our intuition for using parsimony was not completely correct.
Problem~\ref{pr:scs} is not a useful problem formulation.

\section{Adding constraints}
We need a problem formulation that avoids collapsing repeats.
Part of the issue is that we allow ourselves a lot of leeway when reconstructing the genome.
We are allowed to create parts of the genome that are not supported by any reads.
For example, in the proof of~\Cref{thm:scs}, the common superstring $g'$ contains the junction between $x$ and $z$, 
which might not be supported by any read. 
Consider the following example, where the reads are all the substrings of length $\ell=3$ generated from this genome:
\begin{align*} 
	g_\text{true} = \dna{A\underline{ATTCCAG}CTG\underline{ATTCCAG}T} \\
\end{align*}
Notice that there is a repeat of length $>2\ell - 2$, which we underline.
The SCS solution is
\begin{align*}
	g_\text{scs} = \dna{AATTCCAGCTGA\underline{TA}GT}
\end{align*}
The underlined characters in $g_\text{scs}$ indicate the location of the unsupported junction.
That is, the string \dna{TA} does not occur in the reads.
We could consider adding a constraint that any 2-mer in our solution must be contained in some read
(a {\em \kmer} is just a string of length $k$).
The hope would be that such a constraint would prevent the proof of~\Cref{thm:scs} from going through and 
could force the solution to more faithfully reconstruct the repeats.
Formally:
\begin{intproblem*}
	\hspace{1pt} \\
	\textbf{ Input:} A collection of strings $S=\{s_1, \ldots, s_m\}$.\\
		\textbf{ Output:} A shortest string $g$ that is a common superstring of $S$ and satisfies that,
		for all $1 \leq i < |g|$, there exists an $1 \leq j \leq m$ such that
		$g[i..i+1]$ is a substring of $s_j$.
\end{intproblem*}

We use the notation $g[a..b]$ to refer to the substring of $g$ 
whose first position is $a$ and whose last position is $b$.
Now, let us check if our new constraint is too strong by thinking about 
the typical sequencing experiment described in Problem~\ref{pr:bio}. 
The probability that there exist two consecutive positions of the genome which are not spanned by a read can be
computed from the length of the genome and the number and length of reads, if we assume that the reads are sampled uniformly\footnote{In 
other words, each location of the genome is equally likely to generate a read.} at random from the genome.
This probability is extremely small, 
hence our assumption that every 2-mer of the genome is included in some read is reasonable
(subject to the uniformity assumption, which we will revisit in~\Cref{sec:exp}).

Let us go back to our example---we should make sure that the new constraint works to give us the correct answer.
Indeed, the solution $g_\text{scs}$ would no longer be valid, since the underlined 2-mer is not part of any read.
The new solution is:
\begin{align*}
	g_\text{sol} = \dna{AATTCCAGCTGAT\underline{G}AGT} 
\end{align*}
Here, the new character relative to $g_\text{scs}$ is underlined.
Notice that even though every 2-mer in the solution exists in the reads,
$g_\text{sol}$ is still shorter than the true genome.
However, if we consider 3-mers instead of 2-mers, then 
\dna{ATG} and \dna{GAG} are not part of any read.
We can verify that if we require every solution 3-mer to be present in the reads, 
then $g_\text{true}$ finally becomes the unique solution to our problem.
More generally, then, could increasing the value of $k$ for which we require each solution \kmer to be included in the reads
help us reconstruct the true genome?

Perhaps, but we must take care that the chance of a genomic \kmer not being sampled is small.
Let us go back to the typical scenario described in~\Cref{pr:bio}.
For simplicity, consider the case where our one million reads are distributed evenly throughout the five million nucleotide genome, 
i.e. there is precisely one read sampled every fifth position.
Since each read is 200nt long, then every range of width 195nt is spanned by a read.
Even in the more general case when reads are sampled uniformly at random, 
we still expect ranges much wider than 2nt to be spanned.
In fact, for a given $k$, read length, genome length, and number of reads, 
we can derive the probability that a range of size $k$ in the true genome is not spanned by a read.
We will omit the derivation here, but assume that $k$ can be chosen so that this probability is neglible.

\begin{figure*}[h]
	\center
	\includegraphics[scale=0.5]{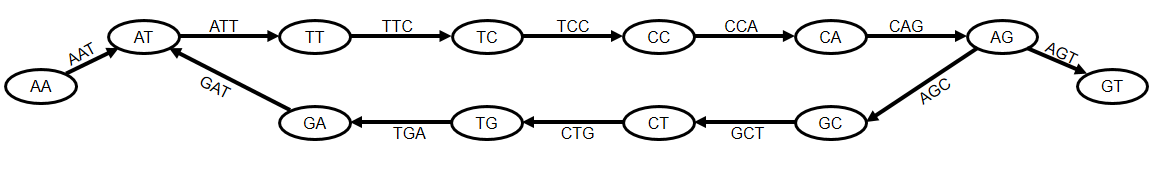}
\caption{An example of a de Bruijn graph of order 3. It is constructed from all substrings of length three from the string 
	$g_\text{true}=\dna{AATTCCAGCTGATTCCAGT}$.
\label{fig:dbgscs}}
\end{figure*}

We are now ready to formalize our stronger constraint. 
Given a string $s$, 
the \kmer-spectrum of $s$, denoted by \kspec, is the set of all $k$-long substrings of $s$, i.e. 
$\kspec(s) = \setof{x}{x \text{ is a substring of $s$ and $|x| = k$}}$.
For a collection of strings $S$,
the \kmer-spectrum is naturally defined as the set of all \kmers of $S$, i.e.
$\kspec(S) = \bigcup_{s\in S} \kspec(s)$.
Our last problem formulation could have been stated as requiring that the output satisfies $\twospec(g) \subseteq \twospec(S)$.
Now we generalize this to \kmers:
\begin{problem}[spectrum-based] \label{pr:dbg1}
	\hspace{1pt} \\
	\textbf{ Input:} A collection of strings $S=\{s_1, \ldots, s_m\}$, and an integer parameter $k$.\\
		\textbf{ Output:} A shortest string $g$ that is a common superstring of $S$ and satisfies  $\kspec(g) \subseteq \kspec(S)$.
\end{problem}
Looks good, but how in the world do we solve this?

\section{Changing the toolbox}\label{sec:toolbox}
Solving computer science problems from scratch can be a daunting task. 
While theoretical computer scientists will appreciate the beauty and challenge of a brand new problem,
our best bet is to try to connect it to something that has been well-studied.
So far, our problem formulations belong to the realm of {\em stringology}, but
it is not clear how to solve~\Cref{pr:dbg1} using the stringology toolbox.
In such cases, it may be useful to change perspectives and see if a different toolbox might apply.
We will try graph theory, 
which offers an intuitive visual representation and is a deeply studied area.

The idea of strings with a constrained spectrum is closely related to something called the {\em de Bruijn graph}.
The $k^\text{th}$ order de Bruin (directed) graph of a set of strings $S$ is denoted by $DB^k(S)$.
Its vertices correspond to $\kmospec(S)$ and the edges correspond to $\kspec(S)$. 
Specifically, for a \kmer $x$, there is an edge from $x[1..k-1]$ to $x[2..k]$.
Note that even if a \kmomer or a \kmer appears more than once in $S$, 
it is still represented by only one vertex or edge, respectively.
An example de Bruijn graph is shown in~\Cref{fig:dbgscs}.

A {\em walk} in a directed graph is a sequence of edges such that every edge leaves the vertex which the previous edge enters;
a walk is {\em edge-covering} if it visits every edge at least once.
Notice that any walk in $DB^k(S)$ spells a string whose \kmer spectrum is exactly the set of edges that the walk visits and,
therefore, is a subset of $\kspec(S)$.
Likewise, any string whose \kmer spectrum is a subset of $\kspec(S)$ is spelled by a unique walk in the graph.
This walk visits all the \kmomers of the string, in their respective order.
Furthermore, observe that if a walk spells a common superstring of $S$, then it is edge-covering.
We can therefore restate Problem~\ref{pr:dbg1} as follows:

\addtocounter{problem}{-1}
\begin{problem}[spectrum-based, restated]
	\hspace{1pt} \\
	\textbf{ Input:} A collection of strings $S=\{s_1, \ldots, s_m\}$, and an integer parameter $k$.\\
		\textbf{ Output:} 
		A shortest string $g$ that is a common superstring of $S$ and is 
		spelled by an edge-covering walk in $DB^k(S)$.
\end{problem}

The de Bruijn graph gives us another way to understand why Problem~\ref{pr:dbg1} helps alleviate the 
problem of over-collapsing of repeats.
Our solution is constrained to walk and cover the graph, and cannot simply include arbitrary characters or jump between 
different vertices.
The reader can use~\Cref{fig:dbgscs} to verify that $g_\text{scs}$ and $g_\text{sol}$ are not spelled by a walk in the
de Bruijn graph, for our example.

In addition to over-collapsing of repeats, Problem~\ref{pr:scs} suffered from being intractable.
What is the computational complexity of Problem~\ref{pr:dbg1}?
The interested researcher would attempt to find if the answer is known, and, in its absense,
would attempt to prove NP-hardness or find an algorithm herself.
In this case, she would discover that the problem is NP-hard by a reduction from Problem~\ref{pr:scs}.
Though we do not include the proof here, it suggests that the difficulty lies in fulfilling the superstring requirement.

As we discussed in~\Cref{sec:complexity}, one way to deal with NP-hardness is to try adding or removing constraints
to make the problem tractable.
Since our interested researcher is quite talented, she would have come across 
the Chinese Postman Problem during her search for an algorithm.
The Chinese Postman Problem is to find the shortest cyclical edge-covering walk, and the problem is known to 
be solvable in polynomial time.
It seems similar to Problem~\ref{pr:dbg1}, but it does not include the superstring requirement.
Let us then try dropping this requirement, though retaining the weaker  requirement that the solution includes all the \kmers from the reads.
\begin{problem}[de Bruijn graph-based] \label{pr:dbg2}
	\hspace{1pt} \\
	\textbf{ Input:} A collection of strings $S=\{s_1, \ldots, s_m\}$, and an integer parameter $k$.\\
		\textbf{ Output:} 
		The string spelled by a shortest edge-covering walk of $DB^k(S)$.
\end{problem}

Indeed, this is a minor variation of the Chinese Postman Problem, and 
we can solve it in polynomial time.
Changing our toolbox has allowed us to make the connection to a well-studied problem and 
identify a formulation which can be solved in polynomial time.

However, we had to make an additional simplification, by dropping the superstring requirement.
As a result, we might now have a read that is not represented anywhere in the solution. 
Our intuition has guided us towards choosing the first option from~\Cref{sec:simplify}, that is, 
we think this simplification is worth making.
But how do we verify that our intuition has not mislead us again and that our formulation is useful?

\section{Judging usefulness: experimental evaluation}\label{sec:exp}

In the process of formulating a problem, we make several simplifying assumptions. 
None of these hold perfectly in the real world.
Thus, even if we have the best algorithm, it can still fail miserably when faced with real or 
simulated data which violates these assumptions.
The ultimate test of an algorithm's usefulness is, therefore, how it fairs when the assumptions are not satisfied.
This is referred to as the {\em robustness} of the algorithm.

In our case, the formulation of Problem~\ref{pr:dbg2} is based on several assumptions. 
One assumption is that there are no errors in the reads.
We could of course modularize the problem (i.e. the third choice in~\Cref{sec:simplify})
and run an error-correction algorithm as a pre-processing step.
Even so, it will not be perfect and some errors will remain.
Another assumption is that every \kmer from the genome is included in some read.
Recall that we chose a $k$ value such that the probability of missing a \kmer is negligible,
under the assumption that the reads are sampled uniformly at random from the genome.
However, this is sometimes not the case, as certain genomic regions are more difficult to sequence 
due to their 3D structure.
Thus, some regions will have very little or no sampled reads, resulting in what are called {\em coverage gaps}.

Testing an algorithm on real data will ultimately uncover if it is not useful.
However, in many cases, we do not know the ground truth in real data. 
This makes measuring the accuracy of our algorithm difficult.
Even when the ground truth is available, a test on real data makes it hard to identify the algorithm's point of failure.
The first stumbling block may be that the problem formulation is flawed even if the data satisfies all our assumptions.
The second is that the algorithm is not robust with respect to violation of the assumptions.
The third is that even if it is robust to the assumptions known to us, 
there are aspects of real data which we do not understand and which are problematic for the algorithm.

Evaluation of an algorithm should therefore proceed in three stages.
\begin{enumerate}
	\item We begin by generating simulated data under the assumptions we have made.
		For example, our influenza simulation was of this kind, since we did not include any errors or coverage gaps.
		This gives us a controlled environment in which we can verify that the algorithm passes the first stumbling block.
	\item Next, we use simulated data which models the real data, to the extent we understand it.
		For example, we may generate reads with errors or with coverage gaps, mimicking what we observe in real data.
		This gives us a controlled environment in which to verify that the algorithm passes the second stumbling block.
	\item Finally, we run the algorithm on real data.  If it does not perform well, we must investigate more carefully to understand which aspect of the data we are not modeling correctly.
\end{enumerate}

\section{Moving the goalposts}\label{sec:contigassembly}
Buoyed by the ability of Problem~\ref{pr:dbg2} to solve the small example of~\Cref{fig:dbgscs}, 
we implement our polynomial time algorithm so that we can check its usefulness.
To do so, we follow the advice of the previous section and first run the algorithm on our simulated influenza data.
To our profound dismay, we again find that the solution does not match the true genome.
When we examine the solution closer, we find that it contains many long correct segments, but 
they are arranged in the wrong order.
The problem, it turns out, is that there are usually many optimal solutions 
(recall that we mistakenly thought we solved this issue in Problem~\ref{pr:scs}).
By looking at de Buijn graphs from simulations, we observe an interesting pattern.
Consider the example in~\Cref{fig:bubbles}.
The graph contains three ``bubbles,'' and 
there are $2^3$ optimal solutions, depending on the order we choose to traverse the top/bottom part of each bubble.
For example, both
$AB_1CD_1EF_1GB_2CD_2EF_2H$ and
$AB_2CD_1EF_2GB_1CD_2EF_1H$ 
are optimal solutions.
This example can be generalized to $n$ bubbles, creating exponentially many ($2^n$) optimal solutions.
Clearly, picking an arbitrary one is unlikely to give us the correct genome.

\begin{figure*}
	\center
	\includegraphics[scale=0.5]{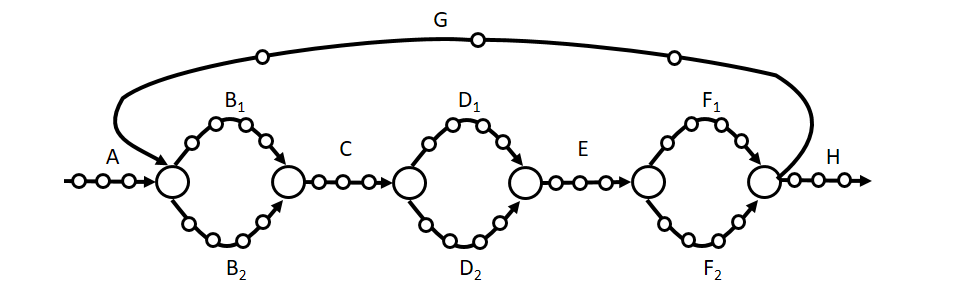}
	\caption{An example of a de Bruijn graph for which Problem~\ref{pr:dbg2} has exponentially many optimal solutions.
\label{fig:bubbles}}
\end{figure*}

Could the culprit be our decision to drop the superstring requirement?
No, because if the length of the paths connecting consecutive bubbles (C, E, and G) is larger than the read length,
there would still be exponentially many solutions.
The culprit, as before, is repeats. 
Both $C$ and $E$ are repeats, appearing at least twice in the solution.
While we are now able to avoid over-collapsing them, 
we nevertheless are not able to determine the order in which the segments between them appear.

At this point, we realize a fundamental limitation of Problem~\ref{pr:bio}.
No matter what we do, if the genome has a repetitive structure, we will never be able to recover it.
Our problem seems unsolvable! 
Should we abandon it then, in frustration? 
Even quit computer science altogether?
Maybe, but before we do that, let us try one more thing.

We attempted to find a {\em genome reconstruction} from $S$ and $k$, 
which we defined as the string spelled by the shortest edge-covering walk of $DB^k(S)$. 
Though there were many such genome reconstructions, they were all based on the reordering of some more ``basic'' 
sequences.
We formalize this notion by saying that a string $c$ is {\em safe}, with respect to $S$ and $k$,
if, for any genome reconstruction $g$, $c$ is a substring of $g$.
We can now relax our requirements on the solution, 
with the hope that it may be more attainable to recover safe strings instead of a genome reconstruction
\begin{problem}[safe contigs]\label{pr:safecomplete}
	\hspace{1pt} \\
	\textbf{ Input:} A collection of strings $S=\{s_1, \ldots, s_m\}$, and an integer parameter $k$.\\
		\textbf{ Output:} The largest collection of strings that are safe for $S$ and $k$.
\end{problem}
Let us refer to a string in the output as a {\em contig}.
Actually, the ``shortest'' requirement in the notion of genome reconstruction has proven to be nothing but trouble and has not ultimately helped, so let us drop it.
That is, let us consider the above problem with respect to {\em genome reconstructions} which are strings spelled by edge-covering walks of $DB^k(S)$. 

We have implicitly defined an optimization problem,
where a feasible solution is a set of safe strings and the optimization criterion is the number of contigs.
Unlike~\Cref{pr:scs}, we are maximizing (as opposed to minimizing) the optimization criterion.
Note that without the optimization criterion, the formulation would not be useful, because
one feasible solution is to simply output every \kmer in $S$ as a separate safe string.
While these \kmers are certainly safe, there are usually many more safe strings.

\section{The unitig and omnitig algorithms}
To obtain an algorithm for~\Cref{pr:safecomplete}, let us try to draw on the intuition from our examples.
Consider~\Cref{fig:bubbles}.
We can see that any edge-covering walk of the graph should visit $C$ (or any other of the labeled walks, for that matter).
After all, once a walk enters the start of $C$ it has no other choice but to continue through to the end of $C$.
Let us try to capture the property of $C$ which makes it safe.
A {\em unitig} is a path such that all vertices except the first one have one incoming edge, and all vertices except the last one have one outgoing edge. 
A single vertex is always considered a unitig.
A {\em maximal unitig} is a unitig that cannot be extended in either direction.
It can be shown that the maximal unitigs of a graph are a partition of the vertex set, and this partition is unique\footnote{Recall that 
a {\em partition} of a set is a set of non-empty subsets such that every element is a member of exactly one subset.}.

In~\Cref{fig:bubbles}, the maximal unitigs correspond to the paths labeled by letters.
In~\Cref{fig:dbgscs}, the maximal unitigs are 
\begin{align*}
	(\dna{AT}, \dna{TT}, &\dna{TC}, \dna{CC}, \dna{CA}, \dna{AG}) \\
	& (\dna{AA}) \\
	& (\dna{GT}) \\
	(\dna{GC}, &\dna{CT}, \dna{TG}, \dna{GA}) 
\end{align*}
%$(att, ttc, tcg, cga, gag)$, $(tat)$, $(agt)$, $(agc, gct, ctg, tga, gat)$.

Based on this intuition, we define the {\em unitig algorithm}, which outputs the strings spelled by all the maximal unitigs in the graph.
It is easy to implement such an algorithm in polynomial time, but does it solve Problem~\ref{pr:safecomplete}?
Recall that a {\em source} is a vertex with no incoming edges, and a {\em sink} is a vertex with no outgoing edges. 
The good news is that if there is at least one genome reconstruction, source and sink, we can prove that maximal unitigs are always safe.
\begin{theorem}
	Let $S$ be a collection of strings, and let $k>1$ be an integer parameter.
	Suppose that $DB^k(S)$ contains at least one edge-covering walk, at least one source, and at least one sink.
	Then a unitig in $DB^k(S)$ spells a string that is safe for $S$ and $k$.
\end{theorem}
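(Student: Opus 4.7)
The plan is to show that for any edge-covering walk $W$ of $DB^k(S)$ and any unitig $U = (v_0, v_1, \ldots, v_t)$ with edges $e_i = (v_{i-1}, v_i)$, the sequence $e_1, e_2, \ldots, e_t$ appears as a contiguous block somewhere in $W$. Since the string spelled by a walk contains as a substring the string spelled by any contiguous block of its edges, this immediately gives that the string spelled by $U$ is a substring of every genome reconstruction, i.e.\ $U$ is safe.

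I would handle the main case $t \geq 1$ as follows. By the unitig definition, every intermediate vertex $v_i$ with $1 \leq i \leq t-1$ has in-degree exactly one in $DB^k(S)$, with unique incoming edge $e_i$, and out-degree exactly one, with unique outgoing edge $e_{i+1}$. The central claim is then that every occurrence of $e_i$ in $W$ (for $i \leq t-1$) is immediately followed by $e_{i+1}$: after $W$ traverses $e_i$ it sits at $v_i$, and since the only outgoing edge from $v_i$ is $e_{i+1}$, the walk is forced to continue with $e_{i+1}$ unless it terminates at $v_i$.

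To rule out termination, I would invoke the source/sink assumption. An edge-covering walk must start at a source of $DB^k(S)$, since a source has no incoming edges, so the only way for $W$ to reach it and cover its outgoing edges is to start there; symmetrically, $W$ must end at a sink. But each intermediate vertex $v_i$ has out-degree one, so it is not a sink and $W$ does not terminate at it; and it has in-degree one, so it is not a source and $W$ does not start at it either. Combined with the central claim, and the fact that $W$ covers $e_1$, any occurrence of $e_1$ in $W$ extends forward through $e_2, e_3, \ldots, e_t$, giving the required contiguous block.

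The trivial case $t = 0$ (single-vertex unitig) follows because the spelled string is just the \kmomer at $v_0$, and $v_0$ is necessarily visited by $W$ as long as $v_0$ is incident to at least one edge. The main obstacle is justifying the source/sink argument cleanly: one must formalize that the ``active'' source (one with outgoing edges) is forced to play the role of the walk's start, and symmetrically for sinks, while handling corner cases involving isolated sources, isolated sinks, and degenerate graphs where the assumed source or sink has no incident edges. Once that framework is in place, the remainder of the proof is the short out-degree-one forcing observation applied inductively along the unitig.
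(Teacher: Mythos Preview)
Your proposal is correct and follows essentially the same approach as the paper: both arguments use that the edge-covering walk must end at a sink, observe that each non-final unitig vertex has out-degree one and hence is not a sink, and then force the walk to continue along the unitig via the unique outgoing edge. The paper phrases this as an induction on the length of the unitig (as a sequence of vertices) rather than as a forward chase along the edges from an occurrence of $e_1$, but the content is identical; the paper also glosses over exactly the source/sink corner cases you flag as the main obstacle.
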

\begin{proof}
	Let $u=(v_1, \ldots, v_p)$ be a unitig, and let $g$ be an edge-covering walk of $DB^k(S)$.
	Note that $g$ must begin with a source and end with a sink.
	By the properties of the de Bruijn graph, 
	$u$ spells a string that is a substring of the string spelled by $g$ if and only if $u$ is a subwalk of $g$.
	We will therefore prove, by induction, that $u$ is a subwalk of $g$.
	In the base case, a unitig of a single vertex is trivially a subwalk of $g$.
	In general, the induction hypothesis gives us that $(v_1, \ldots, v_{p-1})$ is a subwalk of $g$.
	Since $v_{p-1}$ has an outgoing edge to $v_p$, it is not a sink.
	Hence, $g$ does not end at $v_{p-1}$. 
	The only possible vertex that can follow $(v_1, \ldots, v_{p-1})$ in $g$ is $v_p$, 
	because $v_{p-1}$ has only one out-neighbor.
	Thus, $(v_1, \ldots, v_{p-1}, v_p)$ is a subwalk of $g$.
\end{proof}

\begin{sloppypar}
	The bad news is that the unitig algorithm is not an optimal solution to~\Cref{pr:safecomplete}.
For example, in~\Cref{fig:dbgscs}, 
$(\dna{AA},\dna{AT},\dna{TT},\dna{TC},\dna{CC},\dna{CA},\dna{AG},\dna{GC},\dna{CT},\dna{TG},\dna{GA},$
$\dna{AT},\dna{TT},\dna{TC},\dna{CC},\dna{CA},\dna{AG})$
is a safe contig but not a unitig.
In~\Cref{fig:bubbles}, $CD_1$ is an example of a safe contig that is not a unitig.
Thus, the unitig algorithm is only a heuristic for Problem~\ref{pr:safecomplete}.
It finds a feasible solution but not an optimal one.
\end{sloppypar}

Nevertheless, 
we can still use the strategy of~\Cref{sec:exp} to check the usefulness of the unitig algorithm.
As we have stressed throughout, the ultimate test of an algorithm's usefulness 
is not whether it solves our formulation optimally, but whether it performs well in experimental evaluations.
Therefore, we proceed with the first stage of evaluation and run the unitig algorithm
on our simulated influenza data.
Eureka!
All the contigs are indeed substrings of the influenza genome, and most contigs are fairly long!

We could then proceed with the second stage of evaluation by simulating reads with gaps in coverage.
The unitig algorithm would still produce long and accurate contigs, though we may occasionally see a contig which 
is not a substring of the influenza genome.
Because such contigs would be rare, we would tolerate them and move on to introducing erroneous reads into our simulations.
We would find that the contigs are now short and inaccurate. 
Our choice of the first option from~\Cref{sec:simplify}, it turns out, was a mistake. 
But, as we hinted previously, we could still switch to the third option by formulating and solving the problem of 
correcting the errors in the reads. 
If we do that, we would find that the combination of an error correction algorithm and the unitig algorithm 
would perform well in simulations with errors and coverage gaps.
Moreover, this combination would also perform well on real data (i.e. the third state of evaluations).
In summary, the unitig algorithm, when combined with error correction, performs well in experimental evaluations 
and is robust to errors, coverage gaps, and other aspects of real data which we have not mentioned here.
It is a useful algorithm for solving our original~\Cref{pr:bio}.

Simultaneously, a careful investigation of~\Cref{pr:safecomplete} reveals that there is in fact an optimal polynomial time algorithm.
The algorithm is called the {\em omnitig algorithm}.
We will not cover how it works here, since it is fairly involved. 
It might seem that because the omnitig algorithm is optimal, it is strictly better than the unitig algorithm.
However, we must not give absolute credence to the superiority of a solution just because it is optimal for our problem formulation.
Because the omnitig algorithm has only been recently proposed, 
only the first stage of evaluation has been performed.
Whether or not it is ultimately more useful than the unitig algorithm will depend 
on how it fares when subjected to the second and third stages of evaluation.

\section{Conclusion}
By now, the reader has hopefully grown to appreciate the process of applying computer science techniques to a biological problem.
This process can 
happen within the context of a single paper, but, more typically, it spans many years.
Often, biologists do not have years to analyze their data, so 
``quick-and-dirty'' algorithms are used.
Such algorithms are often not solving a well-formulated problem and make it hard to draw on known computer science techniques.
Nevertheless, they can perform well in practice.
As a result, the process in this paper usually occurs only for those 
challenging problems where a quick-and-dirty approach does not work well enough.

The process described here requires an individual or a team that can understand both biology and computer science.
A breadth of computer science knowledge is needed to guide the designer to a formulation that is efficiently solvable,
while an understanding of the biology is needed to guide the designer to a formulation that performs well in experimental evaluations. 
Theoretical computer scientists sometimes ask if there are any difficult unsolved problems that are of interest to bioinformaticians. 
However, such problems are rare.
The difficult part is, more often than not, coming up with the right question, rather than the right answer.

\section{Notes}
This tutorial forms the basis of a series of lectures in a course on algorithms and data structures in bioinformatics at the Pennsylvania State University.
It is suitable for both undergraduate and graduate level.

The tutorial is not meant as a survey of genome assembly algorithms; in fact many important elements and
alternate approaches are not discussed here,
such as repeat resolution using paired-end reads, long-read technologies, or the string graph approach.
Real genome assembly algorithms go far beyond those in this tutorial, though unitigs are at the basis of many of them.
The process of experimental evaluation, which we only touched upon here, has its own set of challenges that need to be overcome 
(e.g. what are the right metrics for measuring accuracy and quality).
The idealized step-by-step developments in this presentation are also not meant to reflect the historical development of assembly algorithms,
and, hence, citations to related literature are not included.
There are several surveys describing assembly algorithms~\cite{el2013next,miller2010assembly,nagarajan2013sequence,schatz2010assembly,simpson2015theory,sohn2016present,wajid2012review}.

The organization and material of this tutorial is based on the constant exchange of thoughts with my colleagues,
either in conversations or through published work.
I am very grateful for their insights and open discussions,
though, as these have become integrated into my own understanding, 
it is impossible to disentangle them and assign proper credit to specific individuals.
I am also grateful for early feedback on the manuscript from Alexandru Tomescu, Jens Stoye, and Adam D. Smith, 
as well as feedback from anonymous reviewers.
This work has been supported in part by
NSF awards DBI-1356529, CCF-1439057, IIS-1453527, and IIS-1421908.

\bibliographystyle{plain}
\bibliography{assemblyLectureNotes}

\end{document}